\providecommand{\algorithmname}{Algorithm}
\numberwithin{equation}{section}
\numberwithin{figure}{section}
\theoremstyle{plain}
\newtheorem{thm}{\protect\theoremname}
\theoremstyle{plain}
\newtheorem{lem}[thm]{\protect\lemmaname}
\newenvironment{proof}[1][\protect\proofname]{\par
\normalfont\topsep6\p@\@plus6\p@\relax
\trivlist
\itemindent\parindent
\item[\hskip\labelsep
\scshape
#1]\ignorespaces
}{%
\endtrivlist\@endpefalse
}
\providecommand{\proofname}{Proof}
\providecommand{\lemmaname}{Lemma}
\providecommand{\theoremname}{Theorem}
\begin{document}

\title{Local Thresholding on Distributed Hash Tables}
\begin{abstract}
We present a binary routing tree protocol for distributed hash table
overlays. Using this protocol each peer can independently route messages
to its parent and two descendants on the fly without any maintenance,
global context, and synchronization. The protocol is then extended
to support tree change notification with similar efficiency. The resulting
tree is almost perfectly dense and balanced, and has $O\left(1\right)$
stretch if the distributed hash table is symmetric Chord. We use the
tree routing protocol to overcome the main impediment for implementation
of local thresholding algorithms in peer-to-peer systems -- their
requirement for cycle free routing. Direct comparison of a gossip-based
algorithm and a corresponding local thresholding algorithm on a majority
voting problem reveals that the latter obtains superior accuracy using
a fraction of the communication overhead.\end{abstract}
\begin{keyword}
Distributed Hash Table, Local Thresholding Algorithms, Binary Tree
Routing, Gossip Based Algorithms, In-Network Computation, Chord, Symmetric
Chord
\end{keyword}
\maketitle

\section{Introduction}

In a world of millions of wired devices, in-network computation algorithms
provide an intriguing alternative to centralization. Where distributed
data is abundant and bandwidth is limited or costly, some applications
can only be implemented distributively. Where adverse manipulation
and control are a concern, distributed architecture is often preferred
over a centralized agent. Finally, scaling an algorithm to the millions
of peers often teaches important lessons on asynchrony, speculative
execution, and the containment of partial failure, which prove important
to more mundane environments such as grid systems.

Algorithms for distributed computation in peer-to-peer systems fall
into several categories. Of these, gossip algorithms are possibly
the most popular and certainly the most extensively studied \citep{KempeGossip,gossipBoyd,Gossip,gossipSpeedup,GossipVoting,gossipKalman,gossipKalman1,gossipSL,dynamicGossip}.
Local thresholding algorithms \citep{MajorityRulej,localRanking,l2,p2pDT,p2pEigen}
are comparable to gossip based algorithms because both address similar
problems and similarly provide a proof for convergence. Local thresholding
algorithms are considered by far more communication efficient than
gossip based algorithms. However, they pose far stricter requirements
to the underlying routing protocol. A gossip based algorithm basically
requires an efficient way in which information can be propagated to
random destinations. In contrast, all known local thresholding algorithms
require cycle free routing. Often, work on local thresholding algorithms
advocates that a routing tree be induced in preprocessing. However,
the non-trivial complexity of inducing and maintaining the tree in
a dynamic network has so far rendered local algorithms impractical.

This work considers the problem of computation in distributed hash-table
(DHT) overlays -- the de-facto standard architecture in peer-to-peer
networks. Gossip algorithms can easily be implemented on a DHT: If
each peer sends messages to a random peer from its finger table then
in $O\left(\log N\right)$ messages this  information will arrive
to a random peer. Local thresholding can be implemented in a DHT using
one of the existing tree routing protocols. However, existing tree
routing protocols \citep{DHTTree1,DHTTree2,dhttreeJ} are ill-fit
for a local thresholding algorithm. Because these protocols were developed
mainly to reduce message redundancy in broadcast or convergecast they
operate in a top-down or bottom-up manner. Thus, a peer cannot send
messages to its tree neighbors without the involvement of either the
root (in a top-down protocol) or its entire subtree (in a bottom-up). 

This paper makes two main contributions to the state-of-the-art: First,
it presents new binary tree routing and change notification protocols
for DHT overlays. This tree routing protocol is local and can be used
for multi-way communication over the tree, including broadcast and
convergecast. The effect of peer joining or leaving is also local
and can be detected and notified using no more than six messages that
are routed on the tree. The Enabled by the binary tree routing protocol,
the second contribution of this paper is a direct comparison of local
thresholding and gossip based algorithms. Our experiments show that
regardless of system size or properties of the data, local thresholding
vastly outperforms gossip. The results are so one sided that they
call into question the continued relevance of gossip algorithms to
computation in DHT overlays.

The rest of this paper is organized as follows: The next section describes
the binary tree routing protocol and the change notification protocol.
Section \ref{sec:Majority-voting} details the implementation of the
two majority voting algorithms. Experiments are described in Section
\ref{sec:Experimentation} and related work in Section \ref{sec:Related-Work}.
Finally, Section \ref{sec:Conclusions} draws conclusions and poses
some further research problems.

\section{\label{sec:Binary-Tree-Routing}Local Binary Tree Routing}

The basic idea of the binary tree routing protocol is to define a
mapping of peers to a subset of the nodes of a full binary tree. The
binary tree can be defined in terms of a one-to-one mapping of $d$-long
binary strings, namely addresses, to tree nodes. Then we define which
peer is mapped to which address. 

Consider a binary tree whose root is the all zero address. Any address
other than that of the root, we divide into three parts: An all zero
suffix, which might be empty, the rightmost set bit, and a prefix,
which might be empty as well. An address is therefore encoded as $p10^{k}$,
where the length of the prefix $p$ is $d-k-1$. We define that the
clockwise descendant of the address $p10^{k}$ is the address $p110^{k-1}$
and the counterclockwise descendant of the address $p10^{k}$ is $p010^{k-1}$.
Addresses ending with a set bit, i.e., $p10^{0}$, have no descendants.
For completeness, we denote $10^{d-1}$ the clockwise descendant of
the root. As can be seen in Figure \ref{fig:Tree-mapping}, this mapping
is similar, but not identical to, the textbook implementation of a
complete binary tree in an array.

\begin{figure*}
\caption{Binary Tree routing}

\begin{minipage}[t]{0.5\textwidth}%
\subfloat[\label{fig:Tree-mapping}Tree mapping to the address space.]{

\includegraphics[width=1\textwidth]{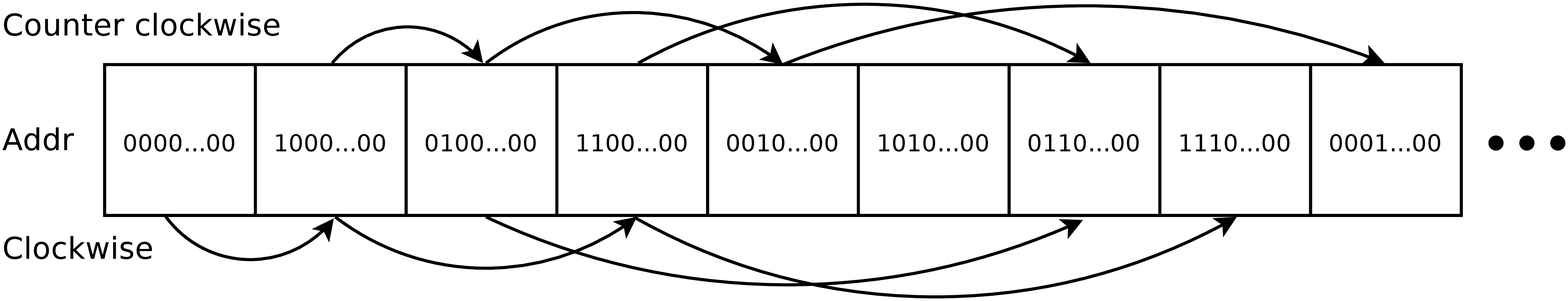}}%
\end{minipage}%
\begin{minipage}[t]{0.5\textwidth}%
\subfloat[\label{fig:Tree-on-DHT-1}Binary Tree Routing on DHT]{

\includegraphics[width=0.7\textwidth]{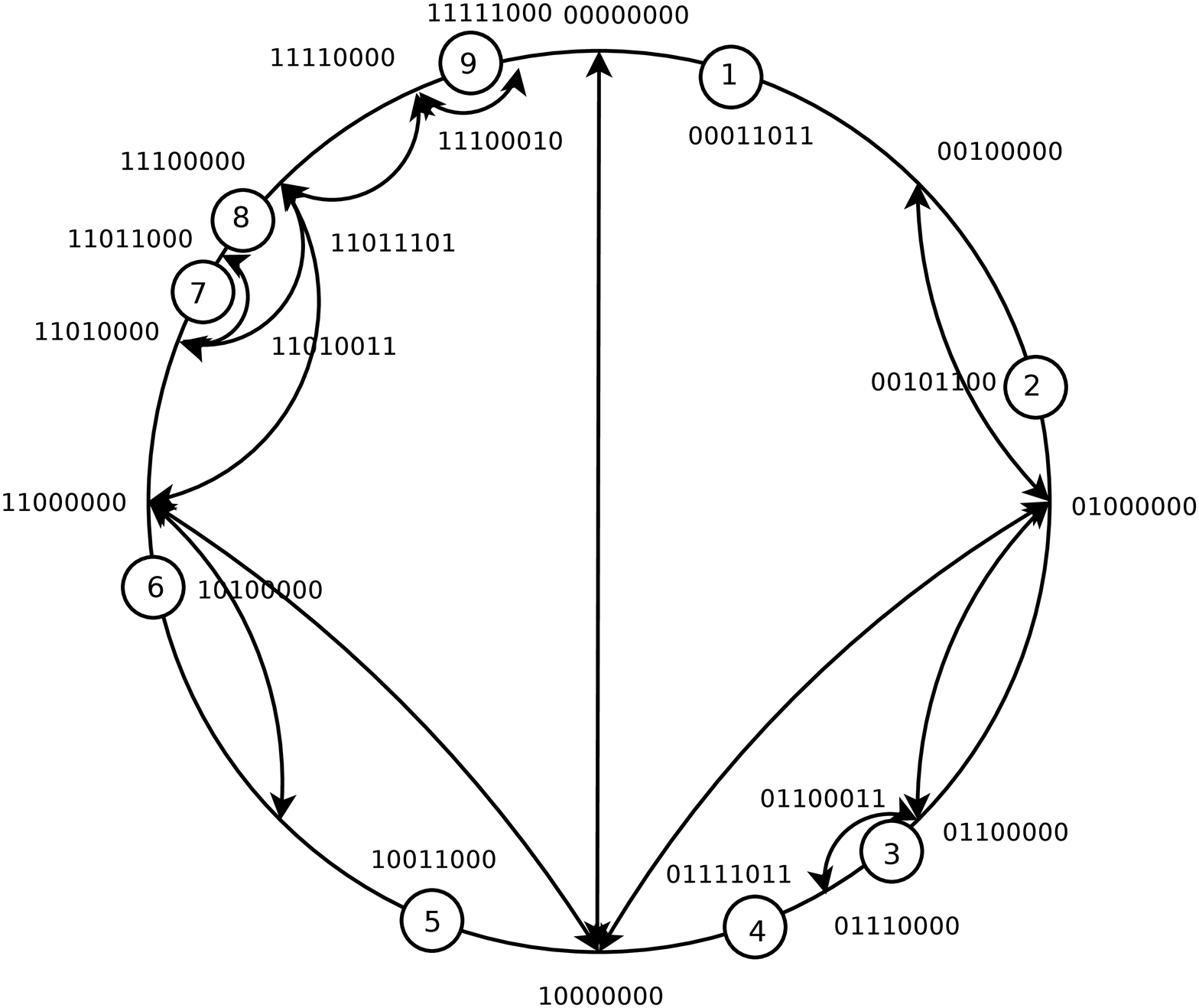}}%
\end{minipage}
\end{figure*}

A peer's position in the tree depends on its assigned address space
segment. The peer whose address space segment contains the all zero
address takes the root position. Any other peer takes a position calculated
as follows: Let the address space segment of $p_{i}$ be $\left(a_{i-1},a_{i}\right]$.
Let $p$ be the (possibly empty), common prefix of $a_{i-1}$ and
$a_{i}$, such that $a_{i-1}=p0X$ and $a_{i}=p1Y$. Then $p_{i}$
takes the position $p10^{k}$. Note that messages routed to the address
which is a peer's position will always be accepted by that peer. 

We conveniently denote the position with which $p_{i}$ is associated
as $pos_{i}$. We further denote the clockwise descendant of $pos_{i}$
as $CW\left[pos_{i}\right]$ and its counterclockwise descendant as
$CCW\left[pos_{i}\right]$. Respectively, if $pos_{j}=CW\left[pos_{i}\right]$
or $pos_{j}=CCW\left[pos_{i}\right]$, then we denote $pos_{i}=UP\left[pos_{j}\right]$.
The functions $CW$, $CCW$, and $UP$ can be computed for any $pos_{i}$
using bit manipulations. 

The following two lemmas show that if there is more than one peer
in the clockwise (or, respectively, the counterclockwise) subtree
of a peer's position then one of those peers occupies a position which
is a fore-parent of the positions of all other peers.
\begin{lem}
\label{lem:continuous}The address space segment associated with the
peers whose positions are the subtree below any peer $p_{i}$ is continuous.\end{lem}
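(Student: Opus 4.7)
The plan is to reduce the lemma to two independent facts: a combinatorial description of which tree addresses populate any subtree, and the observation that the map sending each peer to its tree position is strictly order-preserving. Once these are in place, the continuity of the segment union follows by a short ``consecutive block'' argument.

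First I would prove by induction on $k$ that the subtree rooted at address $p10^{k}$ is exactly the set of tree addresses lying in the contiguous integer interval $(p0^{k+1}, p1^{k+1}]$. The base case is a leaf $p10^{0}$, whose subtree is $\{p1\}$ and whose interval is $(p0, p1]$. For the inductive step, the subtree of $p10^{k}$ decomposes as $\{p10^{k}\}$ together with the CCW sub-subtree rooted at $p010^{k-1}$ and the CW sub-subtree rooted at $p110^{k-1}$; by the inductive hypothesis these sub-subtrees span $(p00^{k}, p01^{k}]$ and $(p10^{k}, p11^{k}]$. Using the bit-arithmetic identity $p01^{k}+1 = p10^{k}$, the three pieces concatenate into the single interval $(p00^{k}, p11^{k}] = (p0^{k+1}, p1^{k+1}]$. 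The root $0^{d}$ falls outside the $p10^{k}$ pattern and is handled as a degenerate case, its subtree being the whole tree.

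Next I would observe that every peer's position lies inside its own segment, which then forces the peer-to-position map to be monotone. For peer $p_{j}$ with segment $(a_{j-1}, a_{j}]$ and common prefix $q$ so that $a_{j-1} = q0X$ and $a_{j} = q1Y$ with $|X|=|Y|=k'$, we have $a_{j-1} \le q01^{k'} < q10^{k'} \le a_{j}$, so $pos_{j} = q10^{k'} \in (a_{j-1}, a_{j}]$. Since the segments partition the address space in increasing order, this forces $pos_{1} < pos_{2} < \cdots < pos_{n}$.

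Combining the two pieces, the peers whose positions fall in the contiguous interval $(p0^{k+1}, p1^{k+1}]$ corresponding to $p_{i}$'s subtree must, by monotonicity, form a consecutive block $\{p_{j_{1}}, p_{j_{1}+1}, \ldots, p_{j_{2}}\}$ of the peer ordering, and their segments telescope to the single contiguous segment $(a_{j_{1}-1}, a_{j_{2}}]$. The main obstacle I anticipate is the inductive step in the first part: making the identity $p01^{k}+1 = p10^{k}$ rigorous and verifying that no tree address in $(p0^{k+1}, p1^{k+1}]$ is omitted from the subtree. Once the combinatorial characterization of the subtree is in hand, the monotonicity of $pos_{j}$ and the consecutive-block conclusion follow with little effort.
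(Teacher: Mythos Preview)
Your proof is correct and rests on the same two ingredients as the paper's---that the subtree of $pos_i = p10^k$ consists of addresses with prefix $p$, and that each peer's position lies inside its own address segment---but it organizes them differently. The paper argues by contradiction: if the union of segments were discontinuous, some peer $p_m$ would be sandwiched between two peers $p_{ccw}$ and $p_{cw}$ of the subtree; since positions lie in their segments, $p_m$'s entire segment is squeezed between $pos_{ccw}$ and $pos_{cw}$, hence every address in it (including $pos_m$) carries the prefix $p$, forcing $p_m$ into the subtree after all. Your route is direct: you sharpen the prefix observation to an exact interval characterization $(p0^{k+1}, p1^{k+1}]$ of the subtree, use the position-in-segment fact to deduce that the peer-to-position map is strictly increasing, and conclude that the peers in the subtree form a consecutive block whose segments telescope. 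What you gain is a reusable structural statement about the tree (the explicit interval) at the price of the small induction you flag; the paper's version is shorter but leaves the reader to reconstruct why the sandwiched segment inherits the prefix.
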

\begin{proof}
See \ref{sec:Proofs}.\end{proof}
\begin{lem}
\label{lem:singularDescendant}For any peer $p_{i}$ whose position
is $pos_{i}$, one of the peers which occupies positions in the subtree
of position $CW\left[pos_{i}\right]$ (respectively, $CCW\left[pos_{i}\right]$)
occupies a position which is a fore-parent of the positions of all
of the other peers in that subtree.\end{lem}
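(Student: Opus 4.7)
My plan is to proceed by induction on the size of the subtree, handling the CW case (the CCW case being symmetric). Write $\pi = CW[pos_i] = p110^{k-1}$, and let $S$ denote the non-empty set of peers whose positions lie in the subtree of $\pi$. If some peer $p_{*} \in S$ satisfies $pos_{*} = \pi$, then $pos_{*}$ is by construction a fore-parent of every other position in $\pi$'s subtree and the claim holds immediately.

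Otherwise, suppose no peer occupies $\pi$ and let $p_{\mu}$ be the peer whose address space segment contains the address $\pi$. The first step is to argue that $p_{\mu} \notin S$. Were $pos_{\mu}$ a proper descendant of $\pi$, then the common prefix $p_{\mu}'$ of $a_{\mu-1}$ and $a_{\mu}$ would strictly extend either $p10$ or $p11$, so the entire segment $(a_{\mu-1}, a_{\mu}]$ would be confined to addresses sharing that extended prefix. Since $\pi = p110^{k-1}$ is the unique address lying on the boundary between the $p10\ldots$ and $p11\ldots$ halves of $\pi$'s subtree, it cannot belong to any such segment, a contradiction. Hence $pos_{\mu}$ is a strict ancestor of $\pi$, and $p_{\mu}$ lies outside $\pi$'s subtree.

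Next, I would extend the continuity argument of Lemma \ref{lem:continuous} to the (possibly unoccupied) subtree of $\pi$ to show that the combined segment of $S$ forms a single interval $(L, R]$: every peer in $S$ has both segment endpoints starting with $p1$, and any peer whose segment is sandwiched between two $S$-peer segments inherits this $p1$-prefix on its endpoints, so its position starts with $p1$ and it too belongs to $S$. Since $(L, R]$ must be disjoint from $p_{\mu}$'s segment and $\pi$ lies in the latter, we have $\pi \leq L$ or $\pi > R$. In the first case every position in $S$ exceeds $\pi$ and hence lies in the CW sub-subtree of $\pi$; in the second every position in $S$ precedes $\pi$ and lies in the CCW sub-subtree. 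Either way $S$ is contained in a strictly smaller subtree, and the inductive hypothesis yields the required fore-parent.

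The main technical obstacle I expect is that Lemma \ref{lem:continuous} is stated only for subtrees rooted at peer positions, whereas the induction above requires continuity for the subtrees of $\pi$, $CW[\pi]$, and $CCW[\pi]$, none of which need be occupied. I would handle this by stating the inductive claim uniformly for an arbitrary tree address and proving the generalized continuity inline as sketched above, recovering the lemma itself by specializing to $\pi = CW[pos_i]$ or $\pi = CCW[pos_i]$.
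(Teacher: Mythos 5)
Your proof is correct in substance but takes a genuinely different route from the paper's. The paper argues in one shot: it takes $pos_{p}$, the lowest common ancestor of the positions of all peers in the subtree of $CW\left[pos_{i}\right]$; if $pos_{p}$ is unoccupied it lies in the segment of a peer whose position is a strict ancestor of $pos_{p}$, and since the LCA property forces occupied positions on both sides of $pos_{p}$, that address sits strictly inside the continuous address range of $p_{i}$'s subtree while belonging to a peer outside it, contradicting Lemma \ref{lem:continuous} exactly as stated -- no generalization needed. Your inductive descent identifies the same peer (the highest-positioned one in the subtree) but must generalize Lemma \ref{lem:continuous} to subtrees rooted at arbitrary, possibly unoccupied addresses; you correctly flag this, and the generalization does go through because every subtree is a prefix set and the sandwiching argument uses only prefixes of segment endpoints. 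What your approach buys is a proof that mirrors what the routing protocol actually does (Alg. \ref{alg:CCWandCW} literally performs your descent); what the paper's buys is brevity and reuse of Lemma \ref{lem:continuous} verbatim. One step of yours needs tightening: when ruling out that $pos_{\mu}$ is a proper descendant of $\pi$, the phrase ``confined to addresses sharing that extended prefix'' does not by itself exclude $\pi$ in the clockwise case, because $\pi=p110^{k-1}$ itself carries the prefix $p11$. The exclusion there really comes from $\pi$ being the \emph{minimum} address with prefix $p11$ together with segments being half-open on the left: any segment whose common prefix extends $p11$ has an exclusive lower endpoint at least $\pi$, so $\pi$ falls outside it. This is a one-line fix, not a structural flaw.
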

\begin{proof}
See \ref{sec:Proofs}.
\end{proof}

Lemma \ref{lem:singularDescendant} can be put into direct use to
define a binary tree of peers, rather than of positions: If all peers
in the subtree of the address clockwise from $p_{i}$'s position have
positions which are in the subtree of $p_{cw}$, then we denote $p_{cw}$
the clockwise neighbor of $p_{i}$. Likewise, the counterclockwise
neighbor of $p_{i}$ is the peer $p_{ccw}$, whose position is the
fore-parent of all of the positions in the counterclockwise subtree
of $pos_{i}$ that are occupied by peers. 

It remains to define how messages can efficiently be routed from a
peer to its neighbors on the tree. The pseudocode of a protocol achieving
this is detailed in Alg. \ref{alg:CCWandCW}. To deliver messages
to the UP neighbor of a peer $p_{i}$, they are first addressed to
$UP\left[pos_{i}\right]$ and then continue being routed to the $UP\left[pos\right]$
of that address until they reach an address occupied by a peer. Clockwise
and counterclockwise messages are first routed to $CW\left[pos_{i}\right]$
and $CCW\left[pos_{i}\right]$. If they reach an address not occupied
by a peer, this is because the destination falls in the address space
segment of a peer $p_{j}$ occupying a different position. A new destination,
which is a step down the tree and away from that of $pos_{j}$, is
thus computed. If the destination address exhausts the address space,
the message is dropped.

Forwarding a message again and again until the destination is found
or the address space is exhausted is often wasteful and unnecessary.
Whenever the address of the destination position falls in the address
space of a peer who has a different position and is also a neighbor
of the sender, the message can be dropped. This is because the message
is doomed to be sent back and forth between the sender and the receiver
until eventually being dropped as there is no peer between them to
accept it. Fortunately, such communication patterns can easily be
avoided if the sender denotes as part of the message header the edge
of its address space in the direction in which the message is sent.
The recipient can then compare that to the edge of its own address
space segment. If the edges are the same, the message can be dropped.

Figure \ref{fig:Tree-on-DHT-1} illustrates this address scheme in
a DHT composed of just nine peers and an address space of eight bits.
For instance, peer number 5, whose address space segment is $\left(01110000,10011000\right]$,
takes the tree position 10000000, and so forth. Messages routed counterclockwise
from 9 first reach position 11010000, which is in the address space
segment of peer number 7. However, since the position of peer number
7 is 11000000, the message is then bounced clockwise to position 11011000,
which is occupied by peer number 8.

\begin{algorithm*}
\caption{\label{alg:CCWandCW}Local Binary Tree Routing}

\textbf{On downcall to SEND with message M and direction d:}

If d is upward then $dest\leftarrow UP\left[pos_{i}\right]$ and $edge\leftarrow null$

If d is counterclockwise then $dest\leftarrow CCW\left[pos_{i}\right]$
and $edge\leftarrow a_{i-1}$

If d is clockwise then $dest\leftarrow CW\left[pos_{i}\right]$ and
$edge\leftarrow a_{i}$

Make a downcall to SEND with the destination address $dest$ and the
message $\left\langle pos_{i},dest,edge,M\right\rangle $ using the
DHT

\textbf{On upcall DELIVER with the message $\left\langle origin,dest,edge,M\right\rangle $:}

If $dest=pos_{i}$ then call $ACCEPT$ with the message $M$ and finish.

If $dest$ is a fore parent of $origin$ then $newdest\leftarrow UP\left[dest\right]$
and $newedge\leftarrow null$

Else if $dest$ is in the clockwise subtree of $origin$ then

- If $edge=a_{i-1}$ then finish

- If $origin=pos_{i}$ then $newdest\leftarrow CW\left[dest\right]$
and $newedge\leftarrow a_{i}$

- Else $newdest\leftarrow CCW\left[dest\right]$ and $newedge\leftarrow a_{i-1}$

Else

- If $edge=a_{i}$ then finish

- - If $origin=pos_{i}$ then $newdest\leftarrow CCW\left[dest\right]$
and $newedge\leftarrow a_{i-1}$

- - Else $newdest\leftarrow CW\left[dest\right]$ and $newedge\leftarrow a_{i}$

- Make a downcall to SEND with the destination $newdest$ and the
message $\left\langle orig,newdest,newedge,M\right\rangle $
\end{algorithm*}

\subsection{Tree properties}

The properties of the tree which are the most important are its expected
maximal depth, the expected degree of internal nodes and the expected
stretch of a hop. The expected maximal depth is mostly important for
broadcast and convergecast applications, in which a message must traverse
the full depth of the tree. The expected degree determines the expansion
rate, which is central for repeated averaging algorithms such as gossip
algorithms. In any algorithm, actual performance is proportional to
the stretch -- the number of actual messages needed to deliver a message
from a tree node to its parent or its descendant.
\begin{lem}
The maximal depth of the tree is $O\left(\log N\right)$ where $N$
is the number of peers.\end{lem}
\begin{proof}
See \ref{sec:Proofs}.
\end{proof}
The stretch of the tree is defined in terms of the number of times
the tree routing protocol calculates a new destination for an UP message
and lets the DHT route the message. Each such message may require
up to $\log N$ IP messages. However, since the tree closely follows
the finger table logic, symmetric Chord peers will almost always have
a direct link to their CW, CCW and UP neighbors. Therefore, the number
of IP messages required for every DHT routing in symmetric Chord is
$O\left(1\right)$. Notice that messages in the CW and CCW direction
follow the same path on the tree as the UP message in the opposite
direction, and therefore have the same stretch. 
\begin{lem}
The expected stretch of the tree is a small constant.\end{lem}
\begin{proof}
See \ref{sec:Proofs}.
\end{proof}

\subsection{\label{sub:Detecting-neighbor-dynamics}Neighbor change notification}

The binary tree routing protocol in Alg. \ref{alg:CCWandCW} defines
neighbor relations logically and is therefore immune to peer dynamics.
Whenever peers join or leave the system the protocol simply reflects
the change by delivering messages according to the current tree structure.
However, some algorithms, including the one in the next section, still
require explicit notification when one of the tree neighbors changes.

The neighbor change notification protocol is based on the following
property of the binary tree routing protocol: Let $p_{i}$ be the
successor of $p_{i-1}$ and let their positions be $pos_{i}$ and
$pos_{i-1}$ respectively. If $p_{i-1}$ leaves the system then the
position of $p_{i}$ either remains $pos_{i}$ or changes to $pos_{i-1}$.
In the former case, the parent of $p_{i-1}$ becomes the parent of
its single direct descendant, if one exists. In the latter, $p_{i-1}$'s
former neighbors become the new neighbors of $p_{i}$ and the former
parent of $p_{i}$ becomes the parent of $p_{i}$'s former single
direct descendant. The same property can prove that it is sufficient
to alert those same five peers when $p_{i-1}$ joins the system.
\begin{lem}
The addition or removal of a peer $p_{i}$ can only affect the tree
connectivity of only five peers which are all tree neighbors of either
$p_{i}$ or its successor $p_{i+1}$.\end{lem}
\begin{proof}
See \ref{sec:Proofs}.
\end{proof}
This property can be used to provide alerts on any single local change
in topology. This is because when $p_{i-1}$ leaves or joins the system,
the DHT alerts its successor that its address space segment has changed.
Once $p_{i}$ is informed of the change in the address space segment
it is able to calculate the positions whose neighbors might have changed.
Hence, $p_{i}$ can route alert messages in all directions from those
positions.

\begin{algorithm*}
\caption{\label{alg:Change-Notification}Neighbor Change Notification}

\textbf{Definitions:} $Pos\left(a,b\right)$ is the position of a
peer whose address space segment is $\left(a,b\right]$

\textbf{On upcall NOTIFY that the predecessor address has changed
from $a_{i-2}$ to $a_{i-1}$ or vice-versa:}

Compute $pos_{fix}=Pos\left(a_{i-2},a_{i}\right)$ and $pos_{var}=\begin{cases}
Pos\left(a_{i-1},a_{i}\right) & Pos\left(a_{i-2},a_{i-1}\right)=pos_{fix}\\
Pos\left(a_{i-2},a_{i-1}\right) & Pos\left(a_{i-1},a_{i}\right)=pos_{fix}
\end{cases}$

Send the message $\left\langle ALERT,pos_{fix}\right\rangle $ in
direction UP, CW and CCW from $pos_{fix}$ using binary tree routing.

Send the message $\left\langle ALERT,pos_{vol}\right\rangle $ in
direction UP, CW and CCW from $pos_{vol}$ using binary tree routing.

\noindent \begin{raggedright}
\textbf{On upcall ACCEPT with the message $\left\langle ALERT,pos\right\rangle $:}
\par\end{raggedright}

If $pos$ is a fore-parent of $pos_{i}$ then $dir\leftarrow upward$

Else if $pos$ is in the clockwise subtree of $pos_{i}$ then $dir\leftarrow clockwise$

Else $dir\leftarrow counterclockwise$

Notify the application of a possible change of the neighbor in direction
$dir$
\end{algorithm*}

\section{\label{sec:Majority-voting}Majority Voting }

Given the infrastructure provided by the DHT overlay and by the binary
tree routing and change notification protocols, we next compare representative
local thresholding and gossip algorithms. We consider the simplest
computation task: a majority vote. However, the two algorithms we
choose are good representations of their respective families. We use
a variant of the local majority voting algorithm of Wolff and Schuster
\citep{MajorityRulej} and compare it to LiMoSense \citep{dynamicGossip},
which is a variant of the gossip averaging algorithm of Kempe et al.
\citep{KempeGossip}, suitable for dynamic data. Both algorithms were
slightly adapted, and are therefore described in the following subsections. 

The input for both algorithms is a single bit $x_{i}\in\left\{ 0,1\right\} $
at each peer $p_{i}$ and the computational task is to decide if on
average most bits are one or zero. We realistically assume that the
input of the peers can change at any moment, and thus that the algorithm
never terminates. The output of each peer is an ad-hoc assumption
on the majority.

\subsection{Local majority voting}

In local majority voting, every peer bases its output on the statistics
of votes it accepts from its tree neighbors -- namely: UP, CW, and
CCW. The peer stores for each of those neighbors two counter pairs:
$X_{UP,i}$ and $X_{i,UP}$ for the upward direction, $X_{CW,i}$
and $X_{i,CW}$ for the clockwise, and $X_{CCW,i}$ and $X_{i,CCW}$
for the counterclockwise direction. Each of those counter pairs counts
votes and the number of those votes which are of one. The counter
pair $X_{v,i}$ records the latest message received from direction
$v$, and $X_{i,v}$ the latest message sent to direction $v$. They
both are initially $\left(0,0\right)$. We conveniently denote $X_{\bot,i}=\left(x_{i},1\right)$
for the input of $p_{i}$. The knowledge of a peer is defined as the
sum of all its inputs $\mathcal{K}_{i}=\sum_{d\in\left\{ UP,CW,CCW,\bot\right\} }X_{d,i}$.
Whenever, according to its knowledge, the majority is of ones, $\left(1,-\frac{1}{2}\right)^{t}\mathcal{K}_{i}\geq0$,
the peer outputs one. Otherwise, it outputs zero.

To decide when and which messages it must send, the peer computes
for every direction $d\in\left\{ UP,CW,CCW\right\} $ the agreement
$\mathcal{A}_{i,d}=X_{d,i}+X_{i,d}$. A violation occurs when for
a direction $v\in\left\{ UP,CW,CCW\right\} $ the sign of the agreement
disagrees with the sign of the difference between the knowledge and
the agreement: $\left(1,-\frac{1}{2}\right)^{t}\mathcal{A}_{i,v}\geq0$
when $\left(1,-\frac{1}{2}\right)^{t}\left(\mathcal{K}_{i}-\mathcal{A}_{i,v}\right)<0$
or $\left(1,-\frac{1}{2}\right)^{t}\mathcal{A}_{i,v}<0$ when $\left(1,-\frac{1}{2}\right)^{t}\left(\mathcal{K}_{i}-\mathcal{A}_{i,v}\right)>0$.
Such violations can be triggered by initialization, by a change of
the peer's vote, or by an incoming message which changes one of the
$X_{d,i}$. 

To resolve a violation triggered by the agreement with a neighbor
in direction $v$, a peer can send a message containing information
on all of the votes received from neighbors in other directions. This
is done by computing $X_{i,v}\leftarrow\mathcal{K}_{i}-X_{v,i}$ and
sending $X_{i,v}$ to the neighbor in the direction $v$. Notice that
after this message is sent, $\mathcal{A}_{i,d}=\mathcal{K}_{i}$,
which resolves the violation.

When a neighbor in direction $v$ changes, the pairs $X_{i,v}$ and
$X_{v,i}$ no longer reflect messages sent to or received from the
current neighbor. Therefore, when a peer receives an alert of a change
in direction $v$, it sets $X_{v,i}$ to $\left(0,0\right)$ and sends
a message to that direction, which sets $\mathcal{A}_{i,v}$ once
more to $\mathcal{K}_{i}$. The change detection protocol alerts the
new neighbor as well. So the new neighbor will sending a message which
reflects its own knowledge. Once both peers send and accept those
messages, $\mathcal{A}_{i,v}$ is again equal to $\mathcal{A}_{v,i}$
and reflects an agreement between $p_{i}$ and its new neighbor.

Note that if $p_{i}$ does not have a neighbor in direction $v$,
then $X_{v,i}$ remains zero and does not affect $\mathcal{K}_{i}$
or the result. Messages sent by $p_{i}$ in direction $v$ would be
dropped by the binary tree routing protocol, but this would not be
indicated to $p_{i}$. We prefer wasting those messages to complicating
the protocol with NACK messages. Additionally, note that to support
the possibility of out of order message delivery, a sequential number
is attached to each outgoing message and a message is dropped when
it arrives after a message which was sent subsequently.

\begin{algorithm}
\caption{\label{alg:DHT-Majority-Voting}DHT Local Majority Voting}

\noindent \begin{raggedright}
\textbf{Input of peer $p_{i}$: }A vote $x_{i}\in\left\{ 0,1\right\} $
\par\end{raggedright}

\noindent \begin{raggedright}
\textbf{Data structure of $p_{i}$:}
\par\end{raggedright}

\noindent \begin{raggedright}
$X_{\bot,i}$ initializes to $\left(x_{i},1\right)$; $X_{UP,i}$,
$X_{i,UP}$,$X_{CW,i}$, $X_{i,CW}$,$X_{CCW,i}$, $X_{i,CCW}$, all
initialized to $\left(0,0\right)$, $seq$, $last_{UP}$, $last_{CW}$,
$last_{CCW}$ all initialized to 0.
\par\end{raggedright}

\noindent \begin{raggedright}
\textbf{Output of peer $p_{i}$:} One if $\left(1,-\frac{1}{2}\right)^{t}\mathcal{K}_{i}\geq0$
zero otherwise.
\par\end{raggedright}

\noindent \begin{raggedright}
\textbf{On change of $x_{i}$: }Set $X_{\bot,i}=\left(x_{i},1\right)$
and call test()
\par\end{raggedright}

\noindent \begin{raggedright}
\textbf{On an upcall to ACCEPT with a message $\left\langle X,seq\right\rangle $
from $Pos_{j}$:}
\par\end{raggedright}

\noindent \begin{raggedright}
Let $v\in\left\{ UP,CW,CCW\right\} $ be the direction of $Pos_{j}$
from $Pos_{i}$.
\par\end{raggedright}

\noindent \begin{raggedright}
If $seq>last_{v}$ then set $X_{v,i}\leftarrow X$, $last_{v}\leftarrow seq$,
and call test()
\par\end{raggedright}

\noindent \begin{raggedright}
\textbf{On an upcall to ALERT with direction $v$:} Set $X_{v,i}\leftarrow\left(0,0\right)$
and call Send$\left(v\right)$
\par\end{raggedright}

\noindent \begin{raggedright}
\textbf{Procedure test():}
\par\end{raggedright}

\noindent \begin{raggedright}
For $v\in\left\{ UP,CW,CCW\right\} $, if $\left(1,-\frac{1}{2}\right)^{t}\mathcal{A}_{i,v}\geq0$
and $\left(1,-\frac{1}{2}\right)^{t}\left(\mathcal{K}_{i}-\mathcal{A}_{i,v}\right)<0$
or $\left(1,-\frac{1}{2}\right)^{t}\mathcal{A}_{i,v}\geq0$ and $\left(1,-\frac{1}{2}\right)^{t}\left(\mathcal{K}_{i}-\mathcal{A}_{i,v}\right)<0$
then call Send$\left(v\right)$
\par\end{raggedright}

\noindent \begin{raggedright}
\textbf{Procedure Send$\left(v\right)$:}
\par\end{raggedright}

\noindent \begin{raggedright}
Let $X_{i,v}\leftarrow\mathcal{K}_{i}-X_{v,i}$, $seq\leftarrow seq+1$
\par\end{raggedright}

\noindent \raggedright{}Send a message $\left\langle X_{i,v},seq\right\rangle $
in direction $v$ using binary tree routing.
\end{algorithm}

\subsection{Gossip majority voting}

The gossip algorithm we use is a variant of LiMoSense \citep{dynamicGossip}.
To simplify the description and the experiments, we use the failure
free version, which does not handle joining and leaving of peers,
or unreliable messaging. We make one important adjustment to the algorithm:
instead of selecting the destination uniformly at random we select
uniformly from among the different destinations in the peer's finger
table. This is justified because in a DHT, following a random finger
$O\left(\log N\right)$ times will lead to a uniformly picked random
peer using just $O\left(\log N\right)$ messages. A second change,
which is semantic more than algorithmic, is that the output is quantized
to either zero or one, in line with the voting problem. A detailed
description of LiMoSense is not included here for lack of space.

\section{\label{sec:Experimentation}Experimental validation}

We conducted two sets of experiments to validate the usefulness of
our algorithms. The first experiment evaluates the performance of
the binary tree routing protocol in terms of the efficiency of the
tree it induces: the degrees of peers, their depth, and the stretch
-- the number of real messages required to send a message from a tree
node to its neighbor. The second compares the local majority voting
algorithm, which uses binary tree routing as the communication infrastructure,
to LiMoSense, which does not. The algorithms are compared in terms
of their scalability and response to stationary and non-stationary
changes in the data. 

We employed a standard peer-to-peer network simulator, peersim \citep{peersim}.
The simulator is efficient enough to simulate up to a million peers
in some experiments. We used reliable messaging and random network
delays of from one to ten simulation cycles. The objective of the
delay is not to approximate wall time but rather to decouple the peers
and avoid locked-step behavior. When using a Chord overlay, we use
an existing add-on to peersim. When using Symmetric Chord \citep{SChord},
we use our own variant, which initializes finger tables accordingly.
All measurements are averaged on ten random experiments, using different
random seeds.

\subsection{Tree Properties}

We investigated two key properties of the tree induced by the binary
tree routing protocol: The density, the depth, and the stretch.The
depth of the tree nodes is the distance from the root to each of them.
The depth is important mostly for applications which use global communication
such as broadcast and converge-cast because, for those applications,
the depth is proportional to the delay. As can be seen in Figure \ref{fig:Distribution-of-depth},
for a tree of $N$ peers, the first $\log N-2$ levels tend to be
completely full. The largest number of peers are at the $\log N$
level of the tree, and the reminder are at a small additional depth.
In none of the experiments we conducted, even with a million peers,
was a peer ever at a depth greater than $\log\left(N\right)+6$. We
conclude that the tree is extremely well balanced.

The stretch of a routing overlay is the number of actual messages
needed to deliver a message from a peer to its tree neighbor. This
metric assumes most of the cost of the protocol is associated with
application level routing decisions (i.e., finding the correct finger,
and so forth) and not with network delays. 

Figure \ref{fig:Number-of-hops} depicts the percentage of neighbors
at any given hop distance. It compares the results for a symmetric
Chord network of 10,000 and of 100,000 peers. The results are nearly
identical: 85\% percent of the peers are one or two hops away from
their tree neighbors. These results are then contrasted with a (non-symmetric)
Chord network of 10,000 peers. In that network the hop distance to
a neighbor is a combination of the hop distance to clockwise neighbors,
which is the same as that in symmetric Chord, and the hop distance
to a counterclockwise neighbor, which is the same as the distance
between any two random Chord peers. When using regular Chord overlay,
75 percent of the tree neighbors are within a hop distance of seven
or less. Although not as good, the average stretch is still well below
$\log N$. 

\begin{figure*}[p]
\caption{Tree depth and stretch}

\begin{minipage}[t]{0.5\textwidth}%
\subfloat[\label{fig:Distribution-of-depth}Distribution of peer depth]{

\includegraphics[width=1\textwidth]{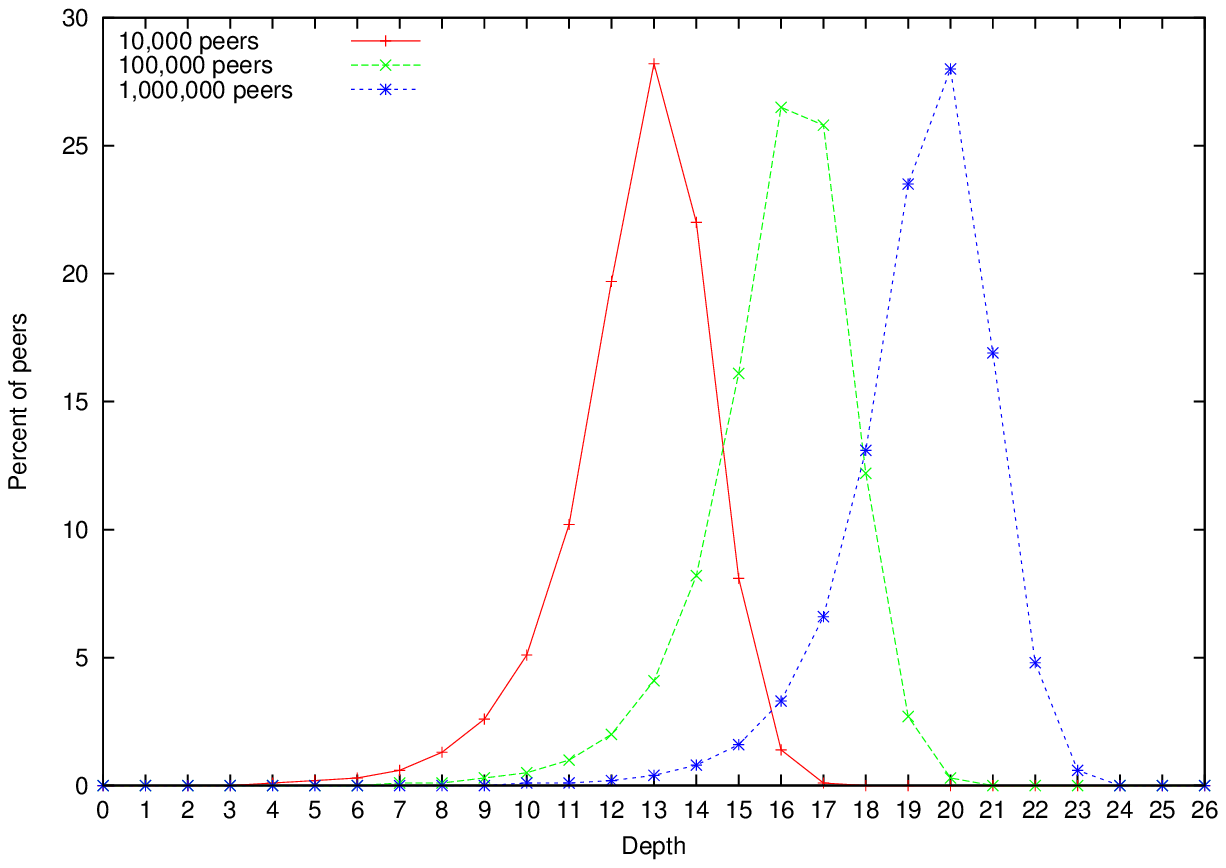}}%
\end{minipage}%
\begin{minipage}[t]{0.5\textwidth}%
\subfloat[\label{fig:Number-of-hops}Number of hops to tree neighbor]{

\includegraphics[width=1\textwidth]{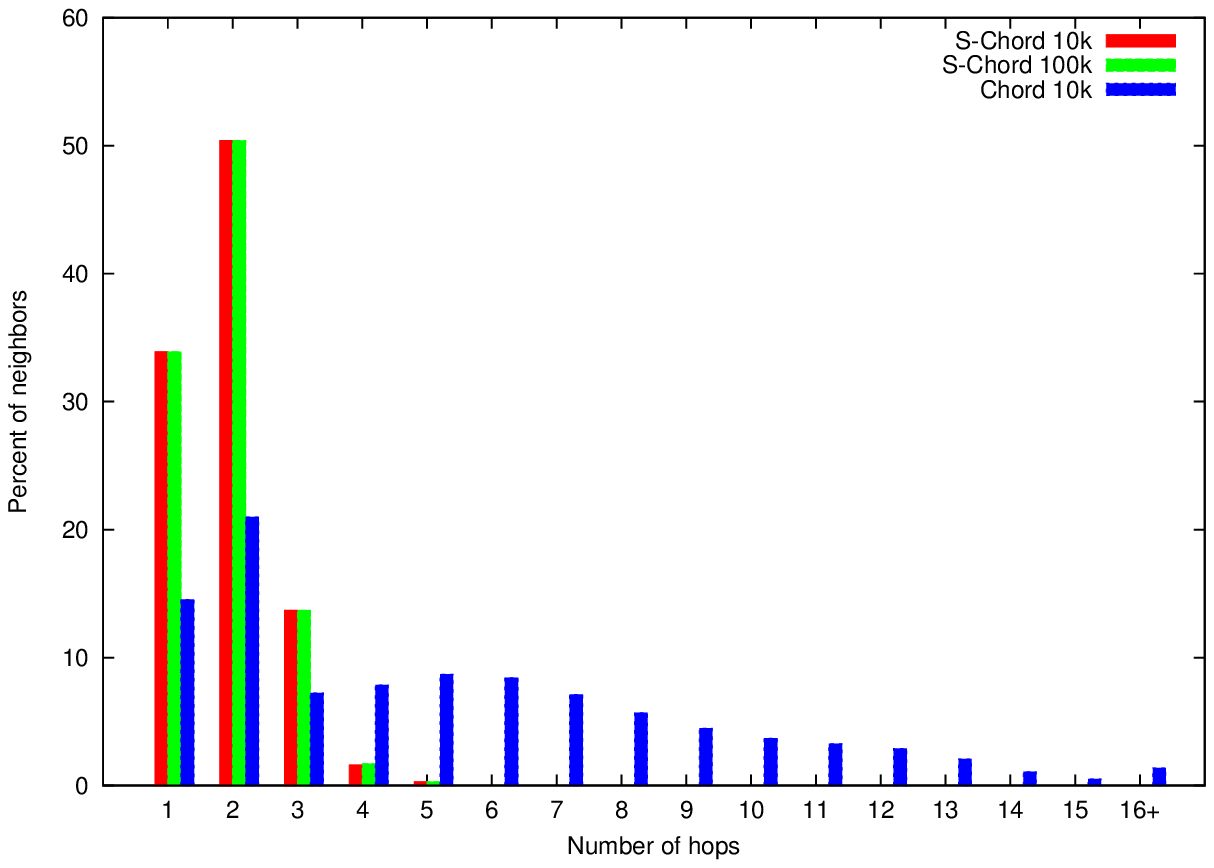}}%
\end{minipage}

\end{figure*}

\subsection{Majority Voting on DHT}

The second set of experiments compares local majority using the binary
tree routing protocol in the context of local majority voting with
majority based on gossip. We separate the experimented to between
those using static votes and those using stationary vote distributions.

\subsubsection{Static data}

An experiment with static data emulates a snapshot scenario of peer-to-peer
computation. In such scenarios, it is assumed that the input is a
distributed sample (i.e., snapshot) taken at very large intervals
-- large enough for the algorithm to stabilize between every two snapshots.
The goal of an algorithm in this state is to stabilize as quickly
and using as few messages as possible. We leave out experiments with
convergence time because of the difficulty of comparing the runtime
of a cycle-driven algorithm to an event driven one, because of space
considerations, and because the results of the two algorithms did
not differ notably in that respect.

The input of the peers is randomly set with an average $\mu_{pre}$.
Once all peers compute the same output of the majority function, the
input of some peers is randomly switched and the average is set to
$\mu_{post}$. At this point, the algorithm proceeds until all of
the peers once more compute the correct result. The number of messages
needed to reach this point is reported. Three very distinct cases
arise: $\mu_{pre}<\frac{1}{2}<\mu_{post}$, $\mu_{pre}<\mu_{post}<\frac{1}{2}$,
and $\mu_{post}<\mu_{pre}<\frac{1}{2}$. Other arrangements of $\mu_{pre}$,
of $\mu_{post}$, and of $\frac{1}{2}$ are symmetric because both
algorithms have no preference for a majority of ones or of zeros.
In the last of the three cases, convergence is instantaneous in both
algorithms because no peer ever outputs the wrong majority. We therefore
focus on the former two cases and experiment with two main arguments:
The scale -- number of peers, and the signal -- distances of $\mu_{pre}$
from $\mu_{post}$, and from $\frac{1}{2}$.

Figure \ref{fig:Convergence-with-static} depicts the number of messages
per peer required for each of the algorithms so that all peers compute
the correct majority on networks of 10,000 to 160,000 peers. The experiments
in Figure \ref{fig:Output-changes} depict the first case, with $\mu_{pre}$
and $\mu_{post}$ varied from 10\% vs. 90\% through to 40\% vs. 60\%.
The most evident outcome of these experiments is that local majority
is by far better than LiMoSense in this metric. 

The reason for the difference may be simple: in local majority, it
does not take long until only a few peers continue to exchange messages.
In LiMoSense, as well as in similar gossip algorithms, peers continue
to send messages periodically until the stopping criterion is reached.
In this experiment, the stopping criterion is that the last peer has
computed the correct result. However, gossip would remain inefficient
if other stopping criteria, such as a fixed number of cycles, or a
decrease of variance to some degree, are used. It is the data dependency
of the local thresholding algorithm which makes the difference.

\begin{figure*}[p]
\caption{\label{fig:Convergence-with-static}Messages until convergence with
static data}

\begin{minipage}[t]{0.5\textwidth}%
\subfloat[\label{fig:Output-changes}Output changes]{

\includegraphics[width=1\textwidth]{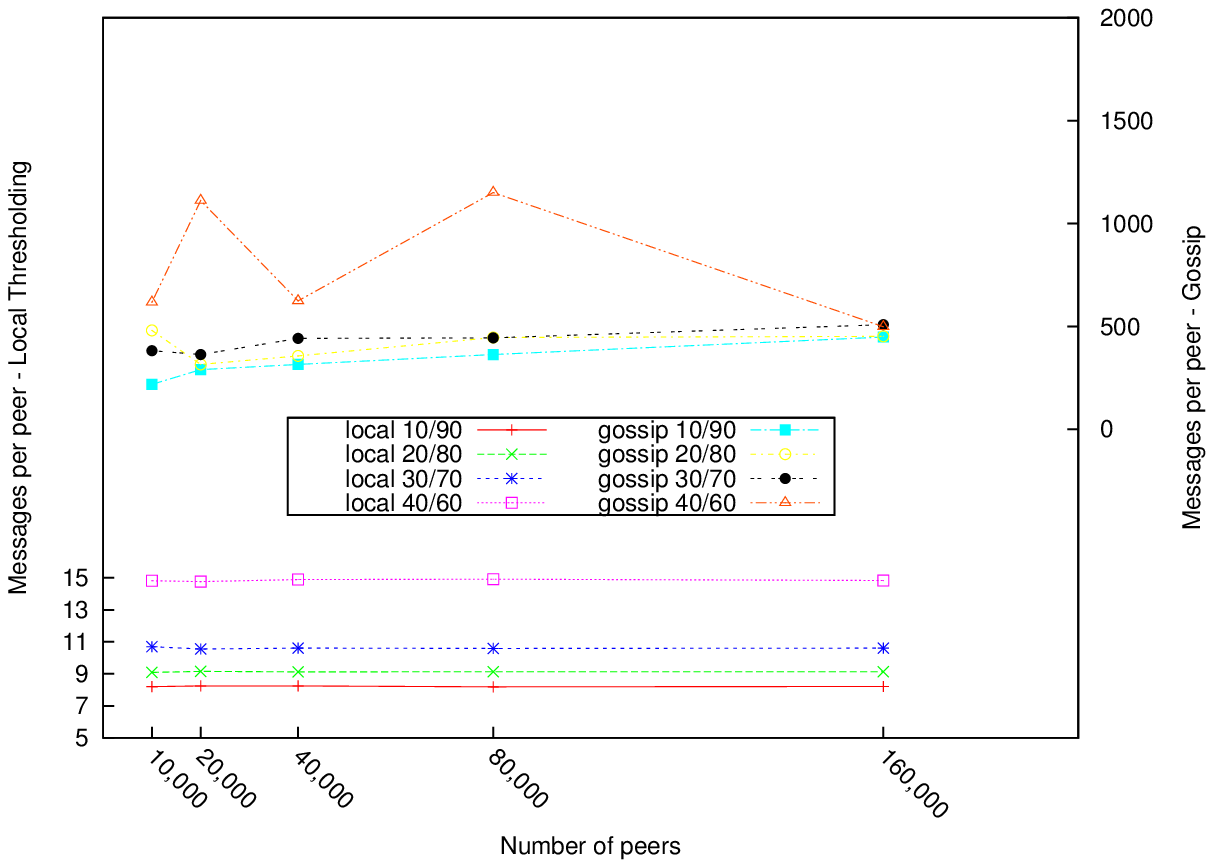}}%
\end{minipage}%
\begin{minipage}[t]{0.5\textwidth}%
\subfloat[\label{fig:Output-remains}Output remains the same]{

\includegraphics[width=1\textwidth]{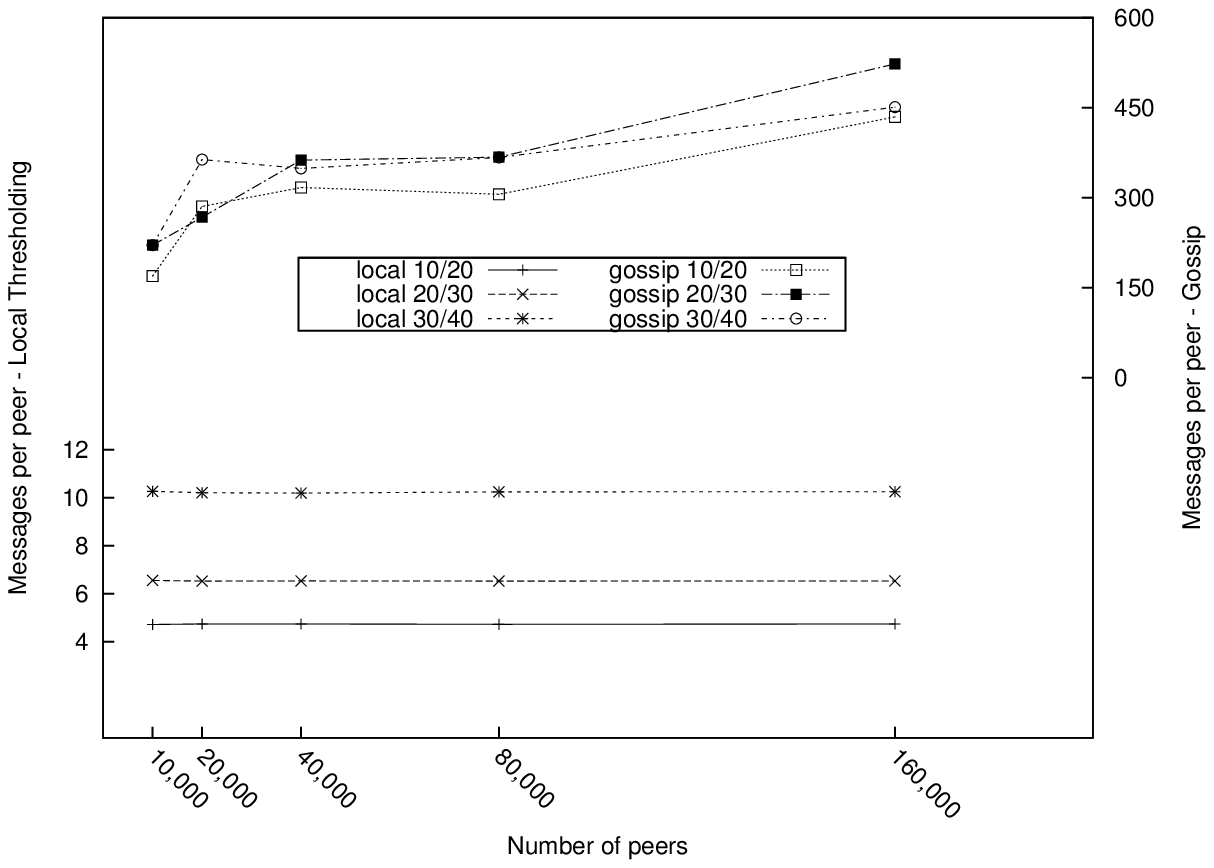}}%
\end{minipage}
\end{figure*}

\subsubsection{Stationary data}

In an experiment with stationary votes, a number of peers are randomly
picked at every given period and their vote is switched, keeping the
overall proportion of zero to one votes constant. We denote the fraction
of peers whose input changes at each average message delay (five simulation
cycles) the \emph{noise rate} and measure it in peers per million
per cycle (ppm/c). 

When inputs constantly change, convergence is impossible and convergence
cost becomes meaningless. Instead, it is the proportion of peers which
compute the correct outcome (i.e., the average accuracy) that matters,
and the ongoing communication costs required to preserve this level
of correctness. A second question is how well is the performance preserved
when the number of peer in the system grows.

Figures \ref{fig:Utility} and \ref{fig:Cost} depict the accuracy
and cost of local majority voting for networks of 10,000 to 160,000
peers and at various noise rates. As can be seen, regardless of the
noise rate, both average accuracy and average cost remain constant
when the system is scaled-up. Furthermore, even when more than one
peer in a thousand changes at every simulator cycle, the accuracy
remains above 90\%, and fewer than 2\% of the peers send a message
at every simulator cycle. 

\begin{figure}[p]
\caption{\label{fig:Scalability}Scalability of local majority on stationary
data}

\begin{minipage}[t]{0.5\textwidth}%
\subfloat[\label{fig:Utility}Local majority utility]{\includegraphics[width=1\textwidth]{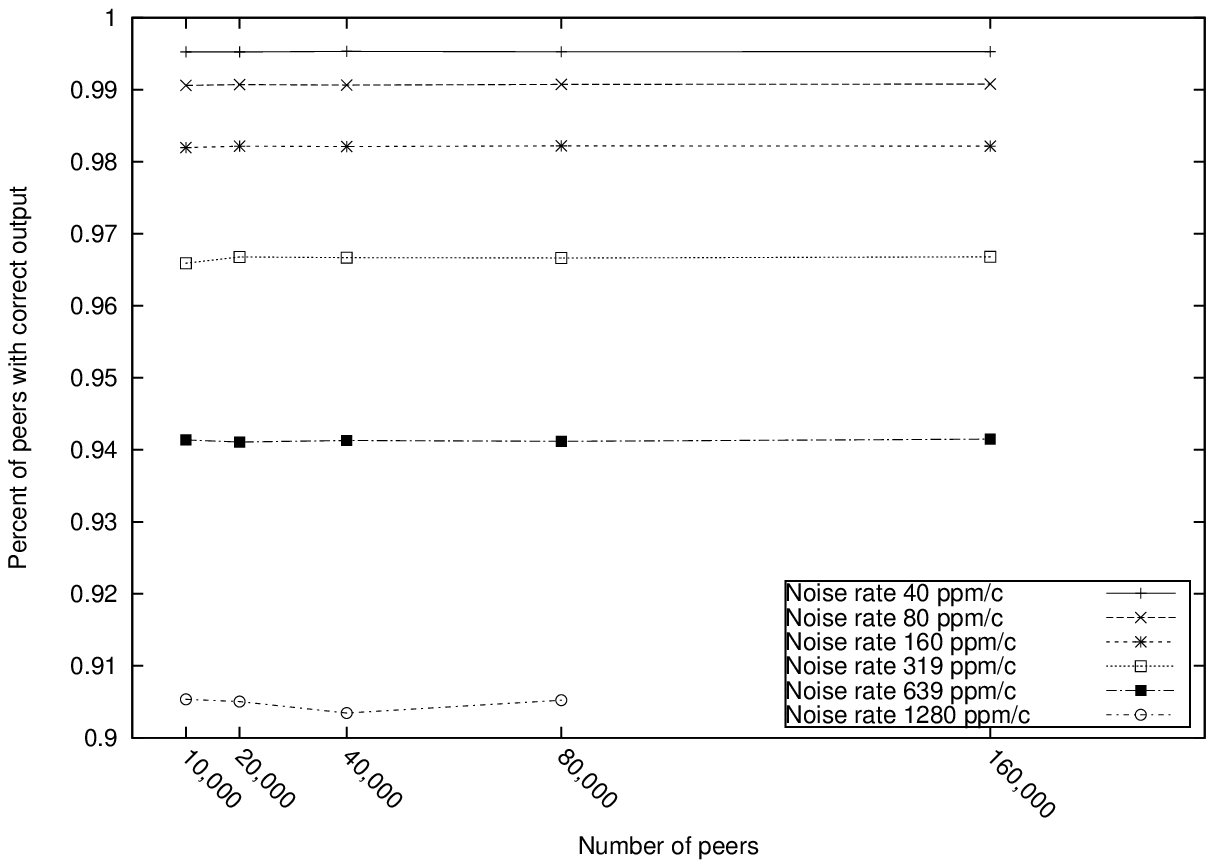}}%
\end{minipage}%
\begin{minipage}[t]{0.5\textwidth}%
\subfloat[\label{fig:Cost}Local majority cost]{\includegraphics[width=1\textwidth]{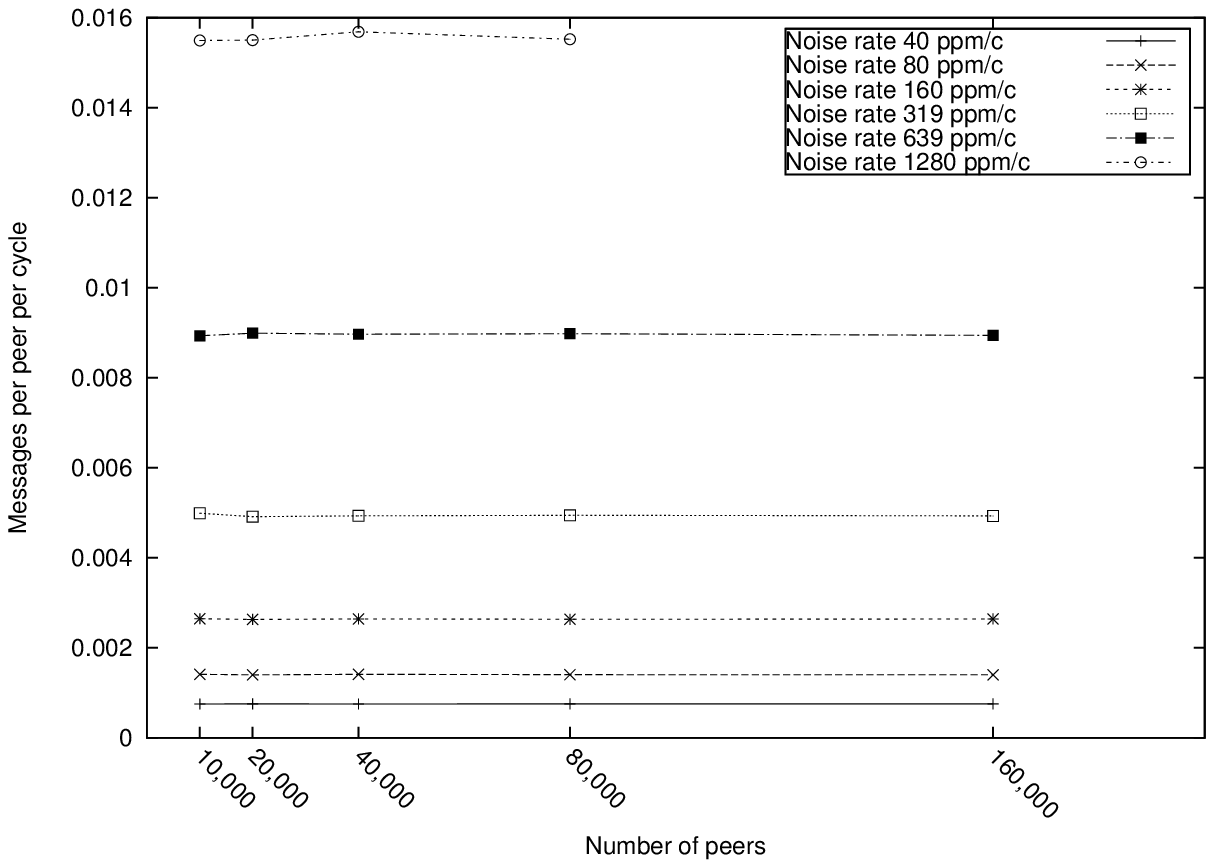}}%
\end{minipage}

\centering{}%
\begin{minipage}[t]{0.5\textwidth}%
\subfloat[\label{fig:Comparison-of-Local}Local vs. Gossip Majority Voting]{

\includegraphics[width=1\textwidth]{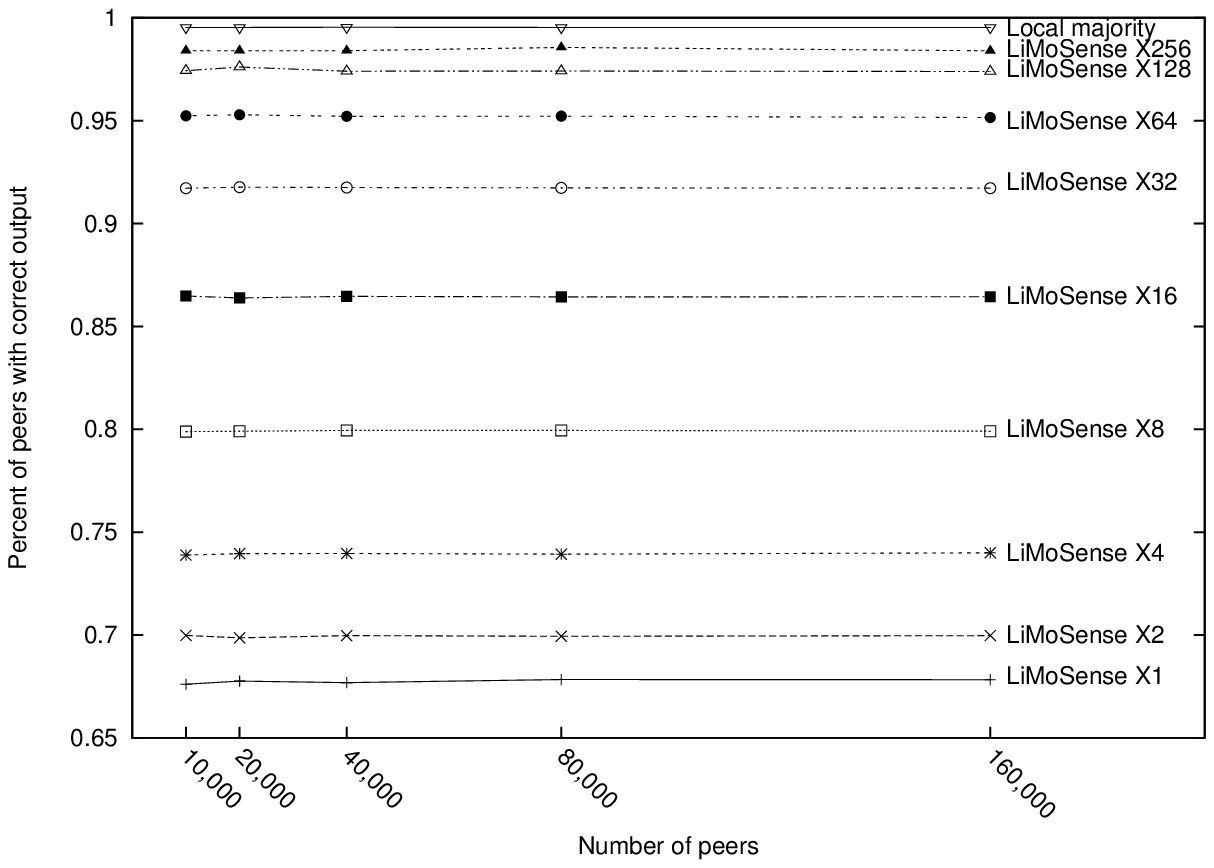}}%
\end{minipage}
\end{figure}

Finally, Figure \ref{fig:Comparison-of-Local} compares the performance
of local majority voting is compared to that of LiMoSense. To compare
the two algorithms on equal terms, the message overhead of LiMoSense
is set to exactly that of local majority voting. Then, LiMoSense is
allowed to send from twice that number to 256 times the number of
messages local majority voting sends. As can be seen, the utility
of LiMoSense does not degrade with scale. However, even when allowed
a number of messages which is eight orders of magnitude larger than
that of local majority voting, more than twice as many peers err,
on average, in LiMoSense. In terms of utility vs. cost, local majority
is overwhelmingly superior.

\section{\label{sec:Related-Work}Related Work}

The work described here relates to work in two areas: computation
and use of spanning trees in DHT overlays, and computation of majority
voting in those and other networks.

\subsection{Spanning trees in DHT overlays}

Bottom-up trees are discussed as part of the Scribe system \citep{Scribe},
in which peers are organized in groups and the peer whose address
is the closest to the groupId is the root. Reverse-path forwarding
allows broadcast in Scribe, but a single peer joining a group can
alter the parenthood relation. 

El-Ansary et al. \citep{DHTTree1} describe a partition based broadcast
tree that does not assure cycle freedom. Huang and Zhang \citep{dhttreeJ}
improve on that with a protocol that assures cycle freedom but in
which peer degrees vary from zero to $\log N$. Lately Huang and Zhang
\citep{DHTTree2} further improved their protocol with balanced DBT
in which the right and left descendants of a peer are, respectively,
the next peer and the peer responsible for the middle address of the
address space of the parent. Each peer then distributes the broadcast
to half the address space of the parent. Balanced DBT offers both
a bounded out-degree of two and a stretch that is typically one. The
binary tree routing protocol presented here further improves on balanced
DBT by removing the need for global partitioning of the address space.
Thus, it allows not only broadcast, but also convergecast, or multi-way
cycle free communication, which is the way local majority voting uses
it.

\subsection{Distributed majority voting}

The computation of the majority has long been a focal point of algorithms
intended for in-network computation. It was the subject of the first
local data mining algorithm \citep{MajorityRulej}, and is a straightforward
reduction of the push-sum gossip based protocol of Kempe et al. \citep{KempeGossip}.
Gossip based algorithms for majority voting were proposed \citep{GossipMajority,GossipVoting}.
However, they relate to the problem of limiting the space needed by
gossip and do not improve the messaging overhead beyond push-sup or
LiMoSense \citep{dynamicGossip}. 

Birk et al. \citep{ChargeFusion} suggested a local majority voting
algorithm for general networks. In that work, each ``1'' vote spans
a tree using the Bellman-Ford algorithm until is either nulled by
a ``0'' vote or it runs against the tree of another ``1'' vote.
The work has several limitations: Trees are data dependent, so they
have to be maintained when the data changes. Also, if multiple majority
votes are to be taken at once (as often happens in peer-to-peer data
mining), then different trees are computed for each vote, and expenses
accumulate. Bellman-Ford also incurs significant synchronization overhead
between the branches of the tree. In contrast, the local majority
voting algorithm described here relies on a binary tree protocol which
is data independent and only requires (local) maintenance on (local)
topology changes.

\section{\label{sec:Conclusions}Conclusions and future work}

For almost a decade since gossip based and local thresholding algorithms
were first described, the former remain the more practical and the
latter the more theoretically efficient. Our binary tree routing protocol
begins to bridge the gap. While interesting in itself, the protocol
is important because it permits seamless execution of any local thresholding
algorithm on a DHT overlay. The two kinds of algorithms can thus be
realistically compared. We believe the conclusion of such comparison
is beyond doubt: gossip based algorithms are by far inferior to local
thresholding algorithms for computation in DHT overlays.

The biggest challenge remaining is computation of local thresholding
algorithms on unstructured networks and on networks where communication
is noisy and asymmetric. Additionally, we see two interesting challenges
in implementing the binary tree protocol for other structured topologies,
and generalizing the protocol for trees of greater degree. Such generalization
may also serve as a means for controlling communication overhead which,
although low, is an artifact rather than an argument of current local
thresholding algorithms.

\bibliographystyle{elsarticle-num}
\bibliography{full}

\appendix

\section{\label{sec:Proofs}Proofs}
\begin{lem}
The address space segment associated with the peers whose positions
are the subtree below any peer $p_{i}$ is continuous.\end{lem}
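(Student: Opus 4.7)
The plan is to characterize the subtree structurally as a numerical address interval, translate being in the subtree into a simple prefix condition on the peer's segment boundaries, and then read off continuity from the cyclic order of boundary addresses on the ring.

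First I would show that the subtree rooted at $pos_i = p 1 0^k$ (with $|p| = d - k - 1$) consists, as a set of addresses, of exactly the numerical half-open interval $(p 0^{k+1}, p 1^{k+1}]$ --- equivalently, every $d$-bit address whose leading $d - k - 1$ bits equal $p$, except $p 0^{k+1}$ itself. This is proven by induction on $k$ from the recursive definition of CW and CCW descendants: descending once replaces the trailing $1 0^k$ by $110^{k-1}$ or $010^{k-1}$, fixing one additional bit after $p$, so the full subtree enumerates exactly the nonzero suffixes of length $k+1$.

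Second I would show that a peer $p_j$ lies in the peer-subtree of $p_i$ if and only if both endpoints $a_{j-1}$ and $a_j$ of its segment carry $p$ as a prefix, i.e.\ both lie in $[p 0^{k+1}, p 1^{k+1}]$. Writing $pos_j = q 1 0^{d - |q| - 1}$ where $q$ is the longest common prefix of $a_{j-1}$ and $a_j$, the position $pos_j$ begins with $p$ precisely when $q$ itself extends $p$ (which is in turn equivalent to both endpoints starting with $p$). The only subtle case is $|q| < |p|$ while $pos_j$ nevertheless begins with $p$; a direct calculation then forces $pos_j = p 0^{k+1}$, which by the first step is excluded from the subtree, so this case cannot contribute.

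Third I would invoke the ring order. The boundary addresses $a_0, a_1, \ldots$ are cyclically sorted, and those falling inside the contiguous numerical interval $[p 0^{k+1}, p 1^{k+1}]$ form a block $b_1 < b_2 < \cdots < b_M$ of globally consecutive boundary addresses (consecutive in the range is the same as consecutive globally, because the range is an interval). By the second step, the peers in the peer-subtree are exactly those whose segment is $(b_\ell, b_{\ell+1}]$ for some $\ell$, and the union of these telescopes to $(b_1, b_M]$, which is continuous.

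The main obstacle I anticipate is the second step: the prefix characterization itself is natural, but the degenerate case $|q| < |p|$ needs a careful check to confirm it only ever produces the single excluded address $p 0^{k+1}$. Once that characterization is in hand, continuity falls out immediately from the fact that the relevant boundary addresses form a contiguous block on the ring, with no further computation required.
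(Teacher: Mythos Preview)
Your proposal is correct. Both your argument and the paper's rest on the same structural fact: the positions in the subtree below $pos_i = p10^{k}$ are precisely the addresses carrying the prefix $p$ (excluding $p0^{k+1}$). The paper packages this as a short proof by contradiction---if the union of subtree segments were discontinuous, some peer $p_m$ would be sandwiched between two subtree peers, forcing every address in $p_m$'s segment (hence $pos_m$ itself) to carry the prefix $p$, so repeated application of $UP$ places $p_m$ in the subtree after all. You instead unfold the same characterization into an explicit if-and-only-if condition on both segment endpoints, and then telescope the resulting consecutive block of boundary addresses directly into a single interval. Your route buys more explicitness---in particular, your careful treatment of the degenerate case $|q|<|p|$ isolates the one excluded address $p0^{k+1}$ that the paper's terser contradiction argument passes over silently---while the paper's route buys brevity by never naming the interval or the telescope.
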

\begin{proof}
Assume not then $p_{i}$ cannot be the root because its subtree is
all of the peers who, together, are associated with the entire address
space. Assume, without loosing generality that $p_{i}$ is in the
position $pos_{i}=p10^{k}$. For $p_{i}$ to have a discontinuous
address space there must be at least three peers $p_{cw}$, $p_{m}$,
and $p_{ccw}$ such that $p_{cw}$ is clockwise from $p_{m}$ which
is clockwise from $p_{ccw}$ and such that $p_{cw}$ and $p_{ccw}$
are in the subtree of $p_{i}$ but $p_{m}$ is not. Since $p_{cw}$
is in $p_{i}$'s subtree we know its position $pos_{cw}$ begins with
the prefix $p$ and the same goes for the position $pos_{ccw}$ of
$p_{ccw}$. Since the position correspond to an address in the peers
address space segment, we know all of the addresses in the address
space segment of $p_{m}$ begin with the prefix $p$. One of those
addresses correspond to $p_{m}$'s position $pos_{m}$ which must
therefore begin with the prefix $p$. Whatever that position is, by
applying the UP operator to it again and again we will reach $pos_{i}$.
Hence, $p_{m}$ is in $p_{i}$'s subtree, which contradicts the premise.\end{proof}
\begin{lem}
For any peer $p_{i}$ whose position is $pos_{i}$, one of the peers
which occupies positions in the subtree of position $CW\left[pos_{i}\right]$
(respectively, $CCW\left[pos_{i}\right]$) occupies a position which
is a fore-parent of the positions of all of the other peers in that
subtree.
\end{lem}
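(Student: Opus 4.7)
The plan is to prove a slightly stronger statement by induction on the depth of the subtree: for any position $z = q10^{l}$ in the tree (occupied or not), the set $S_{z}$ of peers whose positions lie in the subtree rooted at $z$ is either empty or contains a unique peer whose position equals or is a fore-parent of the position of every other peer in $S_{z}$. The lemma then follows by applying this to $z = CW[pos_{i}]$ (and symmetrically $z = CCW[pos_{i}]$).

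The base case $l = 0$ is immediate because the subtree is a single node. For the inductive step I split on whether a peer already occupies $z$. In Case A, some peer $p^{*}$ has $pos^{*} = z$; then $pos^{*}$ is the root of the subtree, so every other peer in $S_{z}$ has position strictly below, and $pos^{*}$ is the desired fore-parent. The substantive case is Case B, where no peer occupies $z$. There every peer in $S_{z}$ lives either in the subtree of $CW[z]$ or in the subtree of $CCW[z]$. The position-construction forces any peer in the $CW[z]$-subtree to have both address-segment endpoints sharing a common prefix extending $q1$, so its segment sits entirely inside the $q1$ range $[q10^{l}, q1^{l+1}]$; symmetrically, peers in the $CCW[z]$-subtree have segments inside the $q0$ range $[q0^{l+1}, z-1]$. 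If only one of these sub-subtrees is populated, the induction hypothesis applied there directly delivers the fore-parent, so the real work is to rule out Case B with both sub-subtrees non-empty.

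To rule that out I look at the peer $p_{c}$ whose segment contains the address $z$ itself. Its segment $(a_{c-1}, a_{c}]$ satisfies $a_{c-1} < z \leq a_{c}$, and Case B says $pos_{c} \neq z$, which by the position formula means the common prefix of $a_{c-1}$ and $a_{c}$ is not equal to $q$. A short prefix calculation shows that whenever $a_{c-1} \in [q0^{l+1}, z-1]$ and $a_{c} \in [z, q1^{l+1}]$ the common prefix is forced to be exactly $q$, contradicting Case B. So at least one endpoint of $p_{c}$'s segment must escape the $q$-range: either $a_{c-1} < q0^{l+1}$ or $a_{c} > q1^{l+1}$. But if $a_{c-1} < q0^{l+1}$ and there is any peer $p_{b} \in S_{CCW[z]}$, then $p_{b}$'s endpoints both lie in $[q0^{l+1}, z-1]$ while $p_{c}$'s segment stretches from below $q0^{l+1}$ past $z-1$ to $a_{c} \geq z$; this places $p_{b}$'s right endpoint $a_{b}$ strictly inside $p_{c}$'s half-open segment, contradicting the fact that segments partition the address space. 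A symmetric overlap argument handles $a_{c} > q1^{l+1}$ when $S_{CW[z]}$ is non-empty. Both escape options are therefore incompatible with both sub-subtrees being non-empty, closing Case B.

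The main obstacle I anticipate is keeping the prefix and address-range bookkeeping clean in the contradiction step — getting the inclusive/exclusive endpoint conventions right between the half-open segments $(a_{j-1}, a_{j}]$ and the closed address ranges $[q0^{l+1}, q1^{l+1}]$, and making the overlap argument between $p_{c}$ and a peer in one of the sub-subtrees airtight.
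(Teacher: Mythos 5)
Your proposal is correct in its essentials but takes a genuinely different route from the paper. The paper's argument is not inductive: it considers the lowest common parent position $pos_{p}$ of all occupied positions in the subtree, observes that if $pos_{p}$ is unoccupied it must lie in the address space segment of a peer occupying a different position, and then derives a contradiction by appealing to Lemma \ref{lem:continuous} (continuity of the address space spanned by a subtree's peers). You instead prove a stronger statement for every position $z=q10^{l}$ by induction on the height of the subtree, and you replace the appeal to Lemma \ref{lem:continuous} with a direct prefix and segment-disjointness computation: if $z$ is unoccupied yet both child subtrees contain peers, the owner $p_{c}$ of the address $z$ either has both segment endpoints inside the prefix-$q$ range, which forces $pos_{c}=z$, or its segment swallows the right endpoint of some peer in one of the child subtrees, contradicting the fact that segments partition the address space. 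What your approach buys is a self-contained proof independent of Lemma \ref{lem:continuous} (your Case-B contradiction essentially re-proves the relevant instance of continuity inline); what the paper's LCA-plus-continuity argument buys is brevity by reusing an already established lemma. One caveat to patch: your assertion that the segment of $p_{c}$ satisfies $a_{c-1}<z\leq a_{c}$ silently assumes $p_{c}$ is not the peer whose segment wraps around the all-zero address (the root peer). In that wrapping case the inequality fails, but the same overlap argument still closes it: a wrapping segment containing both $z$ and $0^{d}$ necessarily contains the right endpoint of every peer in one of the two child subtrees of $z$, so that subtree must be empty and Case B with both sides populated still cannot occur. Adding that case makes your induction airtight.
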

The lemma holds trivially if there are no peers or just one peer in
the subtree. Assuming there is more than one peer in the subtree,
let $pos_{p}$ be the lower common parent position of the positions
of all peers in the subtree. If $pos_{p}$ is occupied by one of those
peers, then the lemma is satisfied. Otherwise, $pos_{p}$ is not occupied
by a peer, possible only if it is in the address space segment of
a peer $p_{j}$ which occupies another position $pos_{j}$. Since
$pos_{p}$ is the lowest common parent, some of the other peers occupy
positions in $CW\left[pos_{p}\right]$ and some in $CCW\left[pos_{p}\right]$.
This means $p_{j}$ cannot be equal to $p_{i}$, since we know that
all the peers are in the subtree of $CW\left[pos_{i}\right]$ (respectively,
the subtree of $CCW\left[pos_{i}\right]$) . We are left with the
conclusion that $pos_{p}$ is in the address space segment of a peer
not in $p_{i}$'s subtree. However, this is in violation of Lemma
\ref{lem:continuous}.
\begin{lem}
The maximal depth of the tree is $O\left(\log N\right)$ where $N$
is the number of peers.\end{lem}
\begin{proof}
The binary tree is fully defined in terms of addresses regardless
of the positions actually occupied by peers. The clockwise and the
counter-clockwise subtrees of a peer at any address are span equal
address spaces. The peers, on the other hand, are randomly and uniformly
distributed in the address space. Hence, if the subtree of a peer
contains $k$ peers then the number of peers in every subtree is distributed
$Bin\left(\frac{1}{2},k\right)$. 

In a binary search tree built from random insertions, the distribution
of the number items in every subtree is uniform. It is known that
the maximal depth of a random binary search tree is roughly $4.3\log N$.
Since the probability that a subtree has more than $\frac{k}{2}+i$
nodes is higher in a random binary search tree than it is in the tree
induced by the protocol, the maximal depth of the tree which is induced
is expected to be smaller than $4.3\log N$.\end{proof}
\begin{lem}
The expected stretch of the tree is a small constant.\end{lem}
\begin{proof}
Call the destination address of the first hop the first address, and
that of the second hop the second address. Any address between that
of the initiator and the first and second addresses must be part of
the subtree of either the initiator or the first and the second peer,
respectively, because of Lemma \ref{lem:continuous}. If there are
more than two hops then the destination of the third hop must be in
the same address space segment as the first address. Or else, the
first address would be the highest in its address space segment, and
thus would be occupied by a peer which would become the parent of
the initiator. The same is true for the destination of the forth hop,
if there is one, and the second address. We conclude that if a message
in the UP direction makes more than two hops then those hops are between
two distinct peers -- the first and the second one -- and that eventually,
one of those peers must be the parent of the initiator.

The distance between the first and the second addresses must be larger
than the address space segment of the initiator. The distance between
the first and the third destinations is at least as large. If the
third destination is not the parent's address then the address space
segment which includes both the first and the third destinations must
be at least three time larger than the address space segment of the
initiator. Respectively, if the forth hop is not the last then the
address space segment of the second peer must be at least seven times
larger than the initiator's address space. In general, if the message
hops $k>2$ times then the address space segment of both the first
and the second peer must be at least $2^{k-2}-1$ larger than that
of the initiator.

It is known \citep{randomsegments} that the length of uniform random
segments is exponentially distributed. Given that the size of an address
space segment is $c$, the probability that the size of the consecutive
segments is $c\cdot2^{k}$ is the probability of sampling both values
from the exponential distribution, $Pr=\left(1-e^{-c\lambda}\right)e^{-c2^{k}\lambda}$.
For any constant $c$, this probability decreases double exponentially
in $k$. We conclude that the expected number of hops between a peer
and its parent is a constant not much greater than three.\end{proof}
\begin{lem}
The addition or removal of a peer $p_{i}$ can only affect the tree
connectivity of only five peers which are all tree neighbors of either
$p_{i}$ or its successor $p_{i+1}$.\end{lem}
\begin{proof}
When $p_{i}$ is added, the address space segment of its successor
$p_{i+1}$ is divided between $p_{i}$ and $p_{i+1}$. One of the
peers receives the address which previously corresponded with $pos_{i+1}$.
Clearly, if that peer is $p_{i}$ then the connectivity of the peers
which previously where the parent and direct descendants of $p_{i+1}$
changes, since $p_{i}$ now replaces $p_{i+1}$ as their neighbor.
The other peer, be it $p_{i}$ or $p_{i+1}$, receives a new position.
Call this peer $p_{new}$ and its position $pos_{new}$. Previous
to $p_{i}$ addition, $pos_{new}$ was not occupied by a peer because
the corresponding address was part of the address space of $p_{i+1}$
and that address space included a higher position -- $pos_{i+1}$.
The addresses between that corresponding with $pos_{new}$ and that
which corresponds to $pos_{i+1}$ are all in the address space of
either $p_{i}$ or $p_{i+1}$. Those addresses all correspond to positions
which are lower than the positions occupied by the two peers. Therefore,
$p_{new}$ can have at most one descendant. When a message was previously
routed up from that possible descendant, it was routed to $pos_{new}$
and then forwarded further up because $pos_{new}$ was part of an
address space belonging to a peer with a different position. Therefore,
whichever is the peer which now accept messages sent up from $p_{new}$,
that peer was the previous parent of $p_{new}$ sole possible descendant.
Because no other address space segment changes, no other peer changes
its position. Since we already enumerated the neighbors of $p_{i}$
and $p_{i+1}$, the connectivity of any peer other than those neighbors
does not change.\end{proof}

\end{document}